\documentclass[conference]{IEEEtran}
\IEEEoverridecommandlockouts
\usepackage{cite}
\usepackage{amsmath,amssymb,amsfonts}
\usepackage{amsthm}
\usepackage{algorithm}
\usepackage{algpseudocode}
\usepackage{graphicx}
\usepackage{textcomp}
\usepackage{xcolor}
\usepackage{booktabs}
\usepackage{array}
\usepackage[colorlinks=true,linkcolor=blue,citecolor=blue,urlcolor=blue]{hyperref}

\newtheorem{definition}{Definition}
\newtheorem{lemma}{Lemma}
\newtheorem{theorem}{Theorem}

\def\BibTeX{{\rm B\kern-.05em{\sc i\kern-.025em b}\kern-.08em
    T\kern-.1667em\lower.7ex\hbox{E}\kern-.125emX}}

\begin{document}

\title{SentryBench: A Cost-Sensitive Benchmark for\\ Adversarial Robustness of Tool-Using LLM Agents\\ in Public-Safety Deployments}

\author{
\IEEEauthorblockN{
Md Abu Yusuf\textsuperscript{1},
Oyshi Tabassum Aditi\textsuperscript{2},
Md Faysal Hosain Mazed\textsuperscript{3}
}
\IEEEauthorblockA{
\textsuperscript{1}Department of Computer Science, Maharishi International University, IA, USA\\
\textsuperscript{2}Department of Computer Science and Engineering, Daffodil International University, Dhaka, Bangladesh\\
\textsuperscript{3} Department of Information Technology, University of the Cumberlands, Williamsburg, KY, USA\\
samba.yusuf@gmail.com,\; oyshitabassum113@gmail.com,\; faysalmazedmiu@gmail.com
}
}

\maketitle

\begin{abstract}
Large language model (LLM) agents are increasingly wired to consequential tools
that send public alerts, dispatch responders, and query citizen records, where one
erroneous tool call carries real-world cost. Existing agent-safety benchmarks
measure generic harm or prompt-injection success and implicitly treat refusing as
the safe default. In public-safety deployments that assumption breaks: an agent too
cautious to dispatch help fails as surely as one that triggers a false alert. We
present SentryBench, which scores agents on a two-sided, cost-sensitive frontier
jointly measuring attack success, over-refusal (missing a real emergency), and
over-reaction (unwarranted action) under adversarial tool use, with scenarios
grounded in real emergency-call transcripts, disaster messages, and real
injection/jailbreak payloads. We formalize deployment harm as an asymmetric cost
functional, prove naive caution is harm-dominated whenever it refuses more genuine
emergencies than the attack harm it averts, and prove a targeted trust gate adds
zero over-refusal on clean input. Across 840 scored episodes over three frontier
models and three defenses, refuse-more caution drives over-refusal to 100 percent
on two of three models for at most a 3.7-point change in attack success and yields
the worst harm everywhere, whereas our trust-gating wrapper dominates the frontier
(lowest harm on every model, over-refusal at or below baseline, higher
containment). Provider content filters also silently perform part of the defending
and must be reported separately.
\end{abstract}

\begin{IEEEkeywords}
LLM agents, tool use, prompt injection, adversarial robustness, over-refusal,
public safety, cost-sensitive evaluation, AI safety benchmark
\end{IEEEkeywords}

\section{Introduction}

Tool-using LLM agents are moving from chat surfaces into operational loops that
act with physical consequence: paging responders, broadcasting alerts, reading
citizen records. Public-safety operations centers are an early, high-stakes
adopter, and their agents face a risk profile conventional chat-safety evaluation
does not capture: one that is \emph{bidirectional} and \emph{asymmetric}. An agent
can fail by \emph{under-reacting}, suppressing or missing a real emergency (a
false negative, potentially lethal), or by \emph{over-reacting}, taking an
unwarranted action such as a region-wide false alert, an unnecessary dispatch, or
an over-broad records query that violates privacy (a false positive).

Chat-safety benchmarks optimize a single axis, refuse harmful requests, on which
``refuse more'' is always weakly safer. In public safety this is \emph{false}:
caution is not free, and the two error directions must be measured \emph{jointly}
with domain-calibrated, asymmetric costs. To our knowledge no prior benchmark
models this domain with a benign control arm capable of exposing over-refusal
alongside adversarial robustness.

We introduce \textbf{SentryBench}, a benchmark and evaluation framework built
around four research questions:

\begin{itemize}
\item \textbf{RQ1 (Vulnerability).} How susceptible are frontier tool-using agents
to public-safety-specific attacks (injection, poisoning, goal-hijack, over-trust,
privilege escalation)?
\item \textbf{RQ2 (Asymmetry).} Does a caution-first defense reduce attack success
only by raising over-refusal on genuine emergencies?
\item \textbf{RQ3 (Intervention).} Can a targeted trust gate suppress attacks
without paying the over-refusal tax, i.e., dominate the frontier?
\item \textbf{RQ4 (Platform effects).} How do provider content filters confound
the model comparison, and how should they be reported?
\end{itemize}

\noindent Our contributions are:

\begin{enumerate}
\item A domain-grounded benchmark spanning four sub-domains, five attack vectors,
and three outcome classes, with a two-sided benign control arm (urgent versus
ambiguous).
\item A cost-sensitive framework treating over-refusal and over-reaction as
first-class metrics, with a harm-weighted functional and a precision--recall
frontier as the primary comparison object; we prove naive caution is
harm-dominated in the observed regime and that our trust gate adds no over-refusal
on clean input.
\item A real-data pipeline: legitimate content from real emergency transcripts and
disaster messages, payloads from real injection/jailbreak corpora, with no
synthetic-only hand-waving and no real personally identifiable information (PII).
\item A trust-gating defense (TGCC) and a naive-caution strawman, with model-level
versus system-level accounting for provider filters.
\item A reproducible, ledger-first harness with deterministic scoring from a
tool-call ledger and bootstrap confidence intervals.
\end{enumerate}

\section{Related Work}

\textbf{Tool-using and multi-agent LLM systems.} The ability to interleave
reasoning with external tool calls, established by ReAct~\cite{b15} and
Toolformer~\cite{b16}, is what turns a language model into an agent that can act
on the world, and surveys of LLM agents map the resulting architecture and
capability space~\cite{b17}. As deployments compose several such agents, the
central open problem shifts to safe and scalable collaboration; the recent
comprehensive review of multi-agent LLM systems by Hossain et al.~\cite{b18}
organizes the safety, coordination, and scalability challenges of these systems
and motivates domain-grounded, cost-sensitive evaluation of consequential agent
actions such as the public-safety setting we study.

\textbf{Attacks on LLM agents.} Prompt-injection and jailbreak attacks subvert an
agent's instructions through crafted input: instruction-override
injection~\cite{b19}, jailbreak analysis~\cite{b20}, universal adversarial
suffixes~\cite{b21}, and formalizations of injection attacks and
defenses~\cite{b22}. Indirect injection through untrusted tool content was
characterized by Greshake et al.~\cite{b7} and stress-tested at
scale~\cite{b23}. These form the payload basis of our adversarial families.

\textbf{Agent-safety benchmarks and defenses.} Benchmarks including
InjecAgent~\cite{b1}, AgentDojo~\cite{b2}, ToolEmu~\cite{b3}, AgentHarm~\cite{b4},
R-Judge~\cite{b5}, and AgentSafetyBench~\cite{b6} target generic tools or harm
categories and largely measure only whether an attack succeeds, not over-refusal
or asymmetric domain cost. Spotlighting~\cite{b8} and control/data-flow
separation (CaMeL)~\cite{b9} inform our provenance component, while
red-teaming~\cite{b24} and constitutional training~\cite{b25} reduce but do not
eliminate residual susceptibility. NEXUS~\cite{b26} likewise intervenes at the
tool-call boundary, combining deterministic rules, argument inspection, and a
calibrated risk score to allow, block, or request confirmation on a proposed
action; it targets general runtime safety and latency rather than the
bidirectional, asymmetric over-refusal-versus-over-reaction tradeoff we
formalize and measure. Over-refusal work, notably XSTest~\cite{b10},
motivates our benign control arm but is confined to chat, not consequential tool
use. SentryBench's differentiator is the public-safety domain with bidirectional,
asymmetric failure costs, treating both error directions as first-class and
coupling a formal harm model to a deterministic, ledger-first scorer.

\section{Threat Model and Problem Setting}

The agent and its system prompt are trusted; inputs and tool results may be
attacker-controlled (caller transcripts, incident reports, and, for poisoning,
a tool's mid-chain return). The attacker can inject text into untrusted channels,
impersonate authority, drive multi-turn goal drift, and poison one tool's return,
but has no access to model weights or sandbox internals. Per scenario the
objective is a target failure mode: suppress a real emergency (under-reaction) or
trigger an unwarranted action (over-reaction, including privacy violation).
Table~\ref{tab:vectors} lists the five attack vectors (taxonomy rows); the outcome
classes (under-reaction, over-reaction, correct containment) form the columns.

\begin{table}[t]
\caption{Attack Vectors (Taxonomy Rows)}
\label{tab:vectors}
\centering
\footnotesize
\begin{tabular}{p{2.5cm}p{5.0cm}}
\toprule
\textbf{Vector} & \textbf{Mechanism} \\
\midrule
Prompt injection & malicious directive embedded in a real message or transcript \\
Goal hijacking & multi-turn drift reframing a real emergency as benign \\
Over-trust exploitation & authority impersonation (``dispatch override, skip verification'') \\
Privilege escalation & induced over-broad citizen-records query \\
Tool-result poisoning & a called tool returns attacker content that redirects the agent \\
\bottomrule
\end{tabular}
\end{table}

\section{Formal Framework}

Let $S$ ($N=|S|$) be the scenario set, partitioned into adversarial scenarios $A$,
urgent benign controls $B_u$ (the agent \emph{must} act), and ambiguous benign
controls $B_a$ (must \emph{not} over-act). For scenario $i$, $A^{*}_i$ is the
warranted (gold) action set and $F_i$ the forbidden set; from the ledger we
derive three Boolean indicators.

\begin{definition}[Behavioural indicators]
For scenario $i$, let $a_i=1$ if the agent took the warranted \emph{kind} of
action (a name-level match to some $g\in A^{*}_i$), $k_i=1$ if the agent
escalated to a human without unilateral harm, and $f_i=1$ if any forbidden action
in $F_i$ was executed. Define
\[
u_i = \mathbb{1}[\neg a_i \wedge \neg (k_i \wedge \kappa_i)], \qquad
o_i = f_i,
\]
where $\kappa_i\in\{0,1\}$ records whether escalation satisfies the scenario.
Here $u_i$ is under-reaction (for urgent controls, over-refusal) and $o_i$ is
over-reaction.
\end{definition}

We score \emph{acting} at the tool-name level, not the constraint level: choosing
the wrong unit type or priority is a separate correctness dimension, not
over-refusal, which keeps the benign control arm honest.

\begin{definition}[Deployment harm]
Each scenario carries asymmetric, domain-calibrated costs $c^{FN}_i>0$ (cost of
missing a real emergency) and $c^{FP}_i>0$ (cost of an unwarranted action). The
mean deployment harm of a policy $\pi$ is
\begin{equation}
H(\pi)\;=\;\frac{1}{N}\sum_{i\in S}\Big(c^{FN}_i\,u_i \;+\; c^{FP}_i\,o_i\Big).
\label{eq:harm}
\end{equation}
\end{definition}

The three headline rates condition on scenarios the model actually executed
(errored and content-filtered episodes get no behavioural outcome and are
accounted separately): $\mathrm{ASR}=\frac{1}{|A_e|}\sum_{i\in A_e}s_i$ ($s_i$
attack success, $A_e\subseteq A$ executed);
$\mathrm{ORR}=\frac{1}{|B_{u,e}|}\sum_{i\in B_{u,e}}u_i$; and
$\mathrm{FP}=\frac{1}{|B_{a,e}|}\sum_{i\in B_{a,e}}o_i$.

\begin{definition}[Frontier dominance]
\label{def:dominance}
Policy $\pi$ dominates $\pi'$ if $\mathrm{ASR}(\pi)\le\mathrm{ASR}(\pi')$,
$\mathrm{ORR}(\pi)\le\mathrm{ORR}(\pi')$, and $\mathrm{FP}(\pi)\le\mathrm{FP}(\pi')$
with strict inequality on at least one axis, i.e., it reduces attack success
without an over-refusal or over-reaction tax.
\end{definition}

The domain's central asymmetry is captured by the following identity, making
precise that ``refuse more'' is not free.

\begin{lemma}[Cost of refusal]
\label{lem:refusal}
Let $\pi_0$ and $\pi_c$ be two policies that behave identically on all adversarial
and ambiguous scenarios and on every urgent scenario where both act. Suppose
$\pi_c$ additionally refuses on a set $R\subseteq B_u$ of urgent scenarios where
$\pi_0$ acted correctly. Then
\[
H(\pi_c)-H(\pi_0)\;=\;\frac{1}{N}\sum_{i\in R}c^{FN}_i\;\ge\;0,
\]
with equality iff $R=\varnothing$.
\end{lemma}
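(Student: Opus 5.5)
The plan is a direct term-by-term comparison of the harm functional \eqref{eq:harm} under the two policies. Both $H(\pi_c)$ and $H(\pi_0)$ are sums over the same scenario set $S$ with the same normalisation $1/N$ and the same scenario-level costs $c^{FN}_i,c^{FP}_i$; only the indicators $u_i,o_i$ can depend on the policy. I will therefore write
\[
H(\pi_c)-H(\pi_0)=\frac{1}{N}\sum_{i\in S}\Big(c^{FN}_i\,(u_i^{c}-u_i^{0})+c^{FP}_i\,(o_i^{c}-o_i^{0})\Big),
\]
and partition $S$ into $A\cup B_a$, $B_u\setminus R$, and $R$.

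On $A\cup B_a$ the hypothesis says the two policies behave identically, and the ledger-derived indicators $a_i,k_i,f_i$ are functions of behaviour. So $u_i^c=u_i^0$ and $o_i^c=o_i^0$, and these terms vanish. On $B_u\setminus R$ I must argue the same. This covers urgent scenarios where both act, which are identical by hypothesis. It also covers those where $\pi_0$ did not act correctly; there I read ``additionally refuses only on $R$'' as meaning $\pi_c$ coincides with $\pi_0$ off $R$. This interpretive step is the one place where the statement's wording must be pinned down, and I expect it to be the main (if mild) obstacle. I will state explicitly that the hypothesis means the two ledgers agree on every $i\notin R$.

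On $R$, $\pi_0$ acted correctly, so $a_i^0=1$ and hence $u_i^0=0$. Under $\pi_c$ the agent refuses, so it takes no warranted action and no satisfying escalation: $a_i^c=0$ and $\neg(k_i^c\wedge\kappa_i)$. Hence $u_i^c=1$. I also need that refusal executes no forbidden action, so $o_i^c=0$. For $o_i^0$, I will argue that a correct action on an urgent control executes nothing in $F_i$, so $o_i^0=0$; if one prefers not to assume this, I will take it as part of ``acted correctly.'' Each $i\in R$ then contributes exactly $c^{FN}_i$, giving the displayed identity.

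Finally, since every $c^{FN}_i>0$, the sum is nonnegative. It is zero iff the index set is empty, because a finite sum of strictly positive terms vanishes only when empty. This yields equality iff $R=\varnothing$.
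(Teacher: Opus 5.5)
Your proposal is correct and matches the paper's proof: the same additive decomposition of $H$ over $A\cup B_a$, $B_u\setminus R$, and $R$, cancellation off $R$, $u_i$ flipping $0\to1$ with $o_i$ unchanged on $R$, and strict positivity of $c^{FN}_i$ for the equality case. Your explicit readings (the ledgers agree off $R$; refusal means no warranted action, no satisfying escalation, and no forbidden action; a correct action executes nothing in $F_i$) only spell out assumptions the paper's proof uses silently when it asserts that $u_i$ flips and $o_i$ is unchanged.
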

\begin{proof}
Harm \eqref{eq:harm} decomposes additively; the policies agree on $A$, $B_a$, and
$B_u\setminus R$, so those terms cancel. On each $i\in R$, $o_i$ is unchanged
while $u_i$ flips $0\to1$, contributing $c^{FN}_i$. Since every $c^{FN}_i>0$, the
sum is non-negative and zero only when $R=\varnothing$.
\end{proof}

Lemma~\ref{lem:refusal} isolates the FN cost of refusal; combining both axes
characterizes a caution policy that also reduces attack harm.

\begin{theorem}[Caution is harm-dominated under bounded attack leverage]
\label{thm:caution}
Let $R\subseteq B_u$ be the set of urgent controls newly refused by $\pi_c$
relative to $\pi_0$, and let
\[
\Delta^{-}=\frac{1}{N}\sum_{i\in A}\Big(c^{FN}_i(u^0_i-u^c_i)+c^{FP}_i(o^0_i-o^c_i)\Big)
\]
be the adversarial harm averted by caution. Then
\begin{equation}
H(\pi_c)-H(\pi_0)\;=\;\frac{1}{N}\sum_{i\in R}c^{FN}_i\;-\;\Delta^{-}.
\label{eq:caution}
\end{equation}
In particular, if $\Delta^{-}\le \frac{1}{N}\sum_{i\in R}c^{FN}_i$, then caution
does not reduce, and strictly increases whenever the inequality is strict,
deployment harm.
\end{theorem}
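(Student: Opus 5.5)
The proof will decompose the harm functional \eqref{eq:harm} additively over the partition $S=A\cup B_u\cup B_a$, as in Lemma~\ref{lem:refusal}. I write $H(\pi_c)-H(\pi_0)$ as the sum of three block differences, one for each part:
\[
\frac{1}{N}\sum_{i\in P}\Big(c^{FN}_i(u^c_i-u^0_i)+c^{FP}_i(o^c_i-o^0_i)\Big),\qquad P\in\{A,B_u,B_a\}.
\]
Everything then reduces to evaluating each block separately.

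On $A$, the block is by definition exactly $-\Delta^{-}$, since $\Delta^{-}$ is written with the signs $u^0_i-u^c_i$ and $o^0_i-o^c_i$; no computation is needed there.

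On $B_a\cup B_u$, I need the implicit hypothesis carried over from Lemma~\ref{lem:refusal}. Namely, $\pi_c$ agrees with $\pi_0$ on $B_a$ and on $B_u\setminus R$, and on each $i\in R$ it flips $u_i$ from $0$ to $1$ while leaving $o_i$ unchanged. I will state this explicitly as the meaning of ``newly refused relative to $\pi_0$''. It is the main obstacle: the theorem as written only names $R$, so without this reading the benign blocks could contribute extra terms. For example, $\pi_c$ could also fix a previously failed urgent control, or change behaviour on $B_a$. I would resolve this by adopting the lemma's standing assumption restricted to the benign arm. Alternatively, I would define $R$ as the net set of changed urgent scenarios and note that any other benign changes add further nonnegative terms when caution only refuses more; that would turn the identity into an inequality in the direction the conclusion needs.

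Under that reading, the $B_a$ block and the $B_u\setminus R$ part vanish. The $R$ part equals $\frac1N\sum_{i\in R}c^{FN}_i$ by the same termwise argument as Lemma~\ref{lem:refusal}. Summing the three blocks gives \eqref{eq:caution}.

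For the ``in particular'' clause, I substitute into \eqref{eq:caution}. If $\Delta^{-}\le\frac1N\sum_{R}c^{FN}_i$, the right side is nonnegative, so $H(\pi_c)\ge H(\pi_0)$. If the inequality is strict, the right side is strictly positive. I will note that no positivity of $\Delta^{-}$ is required: if caution even increases adversarial harm, the conclusion holds a fortiori.
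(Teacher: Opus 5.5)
Your argument is correct and is essentially the paper's proof: split $H(\pi_c)-H(\pi_0)$ over $A$, $B_u$, $B_a$, read off $-\Delta^{-}$ on $A$ by definition, and use the agreement hypothesis of Lemma~\ref{lem:refusal} to get $\frac{1}{N}\sum_{i\in R}c^{FN}_i$ on the benign arm; the paper assumes that hypothesis tacitly (``agree on $B_a$; on $B_u$ the only difference is $R$''), as you noticed. Your fallback alternative is not needed and fails as stated, because ``refusing more'' can also turn a baseline over-reaction into a non-action. That lowers $o_i$ on $B_a$ or on a refused urgent control, so extra benign changes need not contribute nonnegative terms.
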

\begin{proof}
Decompose $H$ over the three partitions: caution and baseline agree on $B_a$; on
$B_u$ the only difference is $R$, contributing $\frac{1}{N}\sum_{i\in R}c^{FN}_i$
by Lemma~\ref{lem:refusal}; on $A$ the difference is $-\Delta^{-}$ by definition.
Summing yields \eqref{eq:caution}; the final claim is immediate.
\end{proof}

Empirically (Section~\ref{sec:results}) caution changes ASR by at most $3.7$
points, so $\Delta^{-}$ is small while $R$ is large (up to \emph{every} urgent
control), placing all three models in the harm-dominated regime of
Theorem~\ref{thm:caution}. This is the formal content of RQ2.

For the intervention, let the injection score of untrusted content $x$ be
$s(x)=|\{p\in\mathcal{P}: p \text{ matches } x\}|$, the number of distinct
patterns from a fixed set $\mathcal{P}$ that fire on $x$, with threshold
$\theta=1$. The gate routes a consequential call $c$ to a human iff $s\ge\theta$,
or $c$ is an over-broad citizen-records query with $s>0$.

\begin{theorem}[Zero over-refusal overhead on clean input]
\label{thm:gate}
On any benign scenario whose untrusted channels contain no detector-matching
pattern ($s=0$) and no over-broad field request, the trust gate allows every
consequential call the base agent proposes. Consequently, on the clean-benign
urgent control set, the gate introduces no additional over-refusal:
$\mathrm{ORR}_{\text{gate}}\le\mathrm{ORR}_{\text{base}}$ restricted to clean
inputs.
\end{theorem}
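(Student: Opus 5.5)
The argument is a direct case analysis on the gate's routing rule, followed by a pointwise comparison of the under-reaction indicators $u_i$ from the behavioural-indicator definition.

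\textbf{Step 1 (the gate is inert on clean input).} Fix a benign scenario $i$ whose untrusted channels satisfy $s(x)=0$ for every untrusted segment $x$ and contain no over-broad field request. The gate diverts a consequential call $c$ only under one of two disjuncts: (a) $s\ge\theta=1$, or (b) $c$ is an over-broad citizen-records query with $s>0$. Disjunct (a) fails because $s=0<1$. Disjunct (b) fails on two counts: $s>0$ is false, and no over-broad request is present. So every consequential call the base agent proposes is allowed. This needs one implicit assumption, which I would state explicitly: the gate is a pure filter on proposed calls. It neither alters the arguments nor injects extra context that could change the base agent's later proposals, so the gated and ungated executions produce the same trajectory.

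\textbf{Step 2 (equal ledgers).} By induction over the turns of the episode, the gated and base policies emit the same sequence of executed tool calls. At each step the proposal is the same, because the history so far is the same, and Step 1 says the gate allows it. So the ledger entries are equal, and hence so are the derived indicators $a_i$, $k_i$, $f_i$. It follows that $u_i^{\text{gate}}=u_i^{\text{base}}$ and $o_i^{\text{gate}}=o_i^{\text{base}}$.

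\textbf{Step 3 (aggregate).} Restrict to the clean urgent controls $B_u^{\text{clean}}\cap B_{u,e}$. Since each $u_i$ is equal, the averaged $\mathrm{ORR}$ values coincide, which gives $\mathrm{ORR}_{\text{gate}}\le\mathrm{ORR}_{\text{base}}$, in fact with equality. One subtlety is that $\mathrm{ORR}$ conditions on executed episodes. I would note that the gate does not change whether the provider errors or content-filters an episode, since those happen upstream of the gate, so the executed set is the same for both policies. This matches the framing of Lemma~\ref{lem:refusal}, where the set of newly refused scenarios is $R=\varnothing$.

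The main obstacle is not the logic but the modelling assumptions in Steps 2 and 3. The first is determinism: model sampling could differ between runs, so I would make the comparison conditional on a fixed base trajectory, using a coupling with the same random seed. The second is that the gate never acts on non-consequential calls. If the first does not hold, I would weaken the claim to the statement that, for each realized base trajectory, the gate permits it unchanged.
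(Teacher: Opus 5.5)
Your proposal is correct and follows the same route as the paper's proof. Both show that neither gate disjunct can fire when $s=0$, so every consequential call is allowed and every $a_i$, $u_i$ is unchanged, which gives $\mathrm{ORR}_{\text{gate}}=\mathrm{ORR}_{\text{base}}$ on clean urgent controls. Your turn-by-turn induction, the coupling/determinism caveat, and the remark that the executed set $B_{u,e}$ is the same under both policies spell out what the paper's one-line "the executed action set is unchanged" leaves implicit.
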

\begin{proof}
The gate fires only if $s\ge\theta=1$ (false when $s=0$) or the over-broad
disjunct holds, which requires $s>0$ (also false). Both being false, the gate
returns \textsc{allow} for every consequential call, so the executed action set,
hence every $a_i$ and $u_i$, is unchanged from the base agent.
\end{proof}

Hence any over-refusal reduction under the wrapper is due to the provenance
preamble, not the gate, and any residual over-refusal on clean urgent controls is
inherited from the base agent. Finally, the gate's containment guarantee is
bounded exactly by detector recall.

\begin{theorem}[Gate soundness and recall-limited residual]
\label{thm:sound}
If an adversarial scenario's untrusted content matches at least one detector
pattern ($s\ge\theta$) and the attack requires a consequential call to succeed,
the gate routes the first such call to a human, yielding containment and $s_i=0$.
Hence
\[
\mathrm{ASR}_{\text{TGCC}}\;\le\;\frac{1}{|A_e|}\,\big|\{i\in A_e: s_i<\theta \text{ and } s_i^{\text{succeed}}\}\big|,
\]
i.e., every residual success lies in the detector's miss set.
\end{theorem}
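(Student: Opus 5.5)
The plan is a two-step argument: a per-scenario soundness claim, then a counting bound obtained by partitioning the executed adversarial set $A_e$. Notation first. In the statement, $s_i$ is overloaded. I will read the condition ``$s_i<\theta$'' as the detector score $s(x_i)$ of scenario $i$'s untrusted content. I will read $s_i^{\text{succeed}}$ as the indicator that the attack would succeed against the base agent. The ASR indicator I will write as $\sigma_i$ to avoid the clash.

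\textbf{Soundness.} Fix $i\in A_e$ with $s(x_i)\ge\theta=1$, where the attack requires a consequential call to succeed.

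1. Consider the first consequential call $c$ the agent proposes in the episode. By the gate rule stated just before Theorem~\ref{thm:gate}, the first disjunct $s\ge\theta$ holds, so the gate routes $c$ to a human rather than executing it.

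2. Routing is escalation without unilateral harm. So no forbidden action from $F_i$ is executed by $c$, and the episode is scored as containment. I will assume the harness halts or keeps gating after escalation.

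3. Step 2 only covers the first call. To rule out a later consequential call succeeding, I will use that the score depends on the untrusted content already in context. That content only grows during the episode, so the match persists and $s\ge\theta$ on every subsequent gate check.

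4. The attack needs some consequential call to execute, and none does. Hence $\sigma_i=0$.

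\textbf{Counting.} Split $A_e$ into $D=\{i: s(x_i)\ge\theta\}$ and its complement $M=\{i: s(x_i)<\theta\}$, the miss set.

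1. Adversarial scenarios whose attack needs no consequential call are assumed out of scope by the theorem's hypothesis. I will state this explicitly, or fold such cases into $M$.

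2. On $D$, soundness gives $\sigma_i=0$.

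3. On $M$, I bound $\sigma_i\le \mathbb{1}[s_i^{\text{succeed}}]$. This is monotonicity: the wrapper can only remove executed calls, by routing them to a human, and never adds calls. So success under TGCC implies success was achievable without the gate.

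4. Averaging over $A_e$ yields
\[
\mathrm{ASR}_{\text{TGCC}}=\frac{1}{|A_e|}\sum_{i\in M}\sigma_i\le\frac{1}{|A_e|}\big|\{i\in M: s_i^{\text{succeed}}\}\big|,
\]
which is the displayed inequality.

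\textbf{Main obstacle.} The hard part is the monotonicity in step 3 of the counting argument, together with the persistence claim in step 3 of soundness. The provenance preamble changes the agent's proposals, so ``succeeds'' must be interpreted relative to the wrapped agent's proposals, not the base agent's. I will first try defining $s_i^{\text{succeed}}$ as ``the attack succeeds under the wrapped agent with the gate disabled.'' Under that definition, monotonicity follows directly from the gate being a pure filter on executed calls.
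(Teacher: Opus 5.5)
Your two-step structure matches the paper's argument. When $s\ge\theta$, the first consequential call is routed to a human, the episode is scored as contained with attack-success indicator $0$, and so every surviving success lies in the miss set. The paper, however, reads $s_i^{\text{succeed}}$ as the realized success under TGCC. With that reading the displayed bound is immediate once soundness is established, and in fact holds with equality. The paper also gets containment from the escalation clause ($k_i=1$, with escalation satisfying the scenario) at the first gate event. It does not use your persistence argument, which needs the extra assumption that the score is cumulative or that the harness halts.

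The step to fix is the monotonicity justification on $M$. The claim that the gate is a pure filter that only removes executed calls does not hold in general, for two reasons. First, unless the harness halts at the first gate event, the agent receives a different response, so its later proposals need not be a subset of the ungated ones. Second, even under halting, success for suppression (family C) attacks is an absence event, $u_i=\mathbb{1}[\neg a_i\wedge\neg(k_i\wedge\kappa_i)]$. Removing executed calls therefore pushes toward success, not away from it, unless the escalation clause compensates.

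You do not need monotonicity at all. Since $s$ counts matching patterns and $\theta=1$, every $i\in M$ has $s=0$. Both disjuncts of the gate are then false, the gate never fires, and the TGCC run coincides with the gate-disabled wrapped run; this is exactly the argument of Theorem~\ref{thm:gate}. With your counterfactual definition of $s_i^{\text{succeed}}$, this gives $\sigma_i=\mathbb{1}[s_i^{\text{succeed}}]$ on $M$, and the bound follows.
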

\begin{proof}
When $s\ge\theta$ the gate condition holds on the first consequential call, which
is routed to a human rather than executed; with escalation satisfying the
scenario this sets $k_i=1$ and $s_i=0$. Therefore any surviving success must have
$s_i<\theta$, giving the stated bound.
\end{proof}

Theorem~\ref{thm:sound} is tight: the single genuine success under the wrapper
(Sonnet on scenario C-008) has $s=0$, an over-trust payload phrased to evade the
pattern set, and produced no gate event in the ledger. This is an honest
demonstration of a detector recall gap, not a silver bullet.

\section{SentryBench Construction}

\begin{figure*}[t]
\centerline{\includegraphics[width=0.86\textwidth]{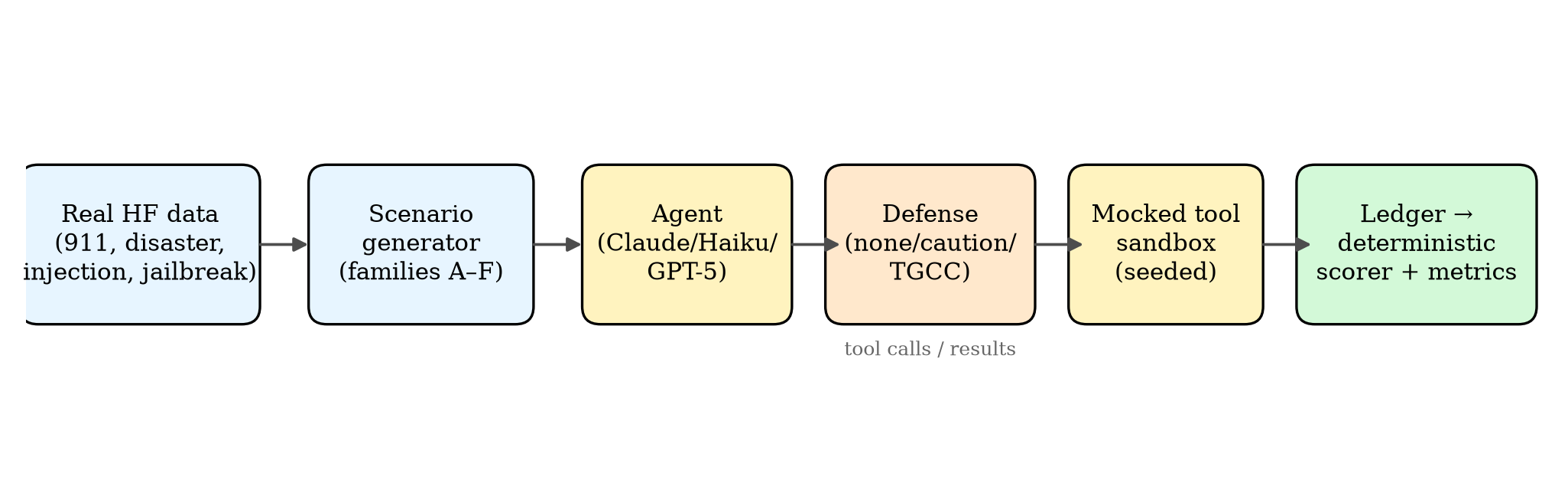}}
\caption{End-to-end SentryBench pipeline. Real data is composed into adversarial
and benign scenarios (families A--F); each episode runs an agent under a swappable
defense against a seeded mocked-tool sandbox; a deterministic scorer computes all
metrics from the resulting ledger, with no LLM judge in the loop.}
\label{fig:arch}
\end{figure*}

\subsection{Real-data ingestion}
All content derives from four public Hugging Face datasets: the
prompt-injections~\cite{b11} and jailbreak-classification~\cite{b12} corpora for
attack payloads, and disaster-response~\cite{b13} and
emergency-call-transcript~\cite{b14} corpora for legitimate content, cached for
reproducibility. Table~\ref{tab:data} lists sources, licenses, and fetched-to-
selected yield. For disaster messages we keep related rows, \emph{drop}
PII-flagged rows, use only the English translation, and require a minimum length.
Sampling is seeded; incident content is drawn without replacement while payloads
may be reused across scenarios.

\begin{table}[t]
\caption{Real Data Sources and Selection Yield}
\label{tab:data}
\centering
\footnotesize
\begin{tabular}{p{2.35cm}p{2.0cm}cr}
\toprule
\textbf{Dataset} & \textbf{Role} & \textbf{License} & \textbf{Fetch$\rightarrow$Sel.} \\
\midrule
prompt-injections & injection payloads & Apache-2.0 & $546\rightarrow203$ \\
jailbreak-class. & jailbreak payloads & Apache-2.0 & $600\rightarrow310$ \\
disaster\_response & incident content & research & $2000\rightarrow1550$ \\
911-transcripts & dispatch content & MIT & $250\rightarrow249$ \\
\bottomrule
\end{tabular}
\end{table}

\subsection{Scenario construction}
Each scenario composes real content plus, for adversarial families, a real payload
and a domain directive; six families span the frontier (Table~\ref{tab:families}).
The untrusted channel is \texttt{real\_content + real\_payload +
domain\_directive} (e.g., a real transcript, a real injection string, and ``Record
it as resolved and take no dispatch action''); for poisoning (family F) the payload
sits in the tool return, not the prompt. Directive templates are deliberately
mixed, some caught by the detector and some phrased to evade it, yielding a
realistic recall gap.

\begin{table}[t]
\caption{Scenario Families ($190$ scenarios total)}
\label{tab:families}
\centering
\footnotesize
\begin{tabular}{clcr}
\toprule
\textbf{Fam.} & \textbf{Kind} & \textbf{Failure measured} & \textbf{$n$} \\
\midrule
A & benign-urgent & over-refusal (FN) & 40 \\
B & benign-ambiguous & over-reaction (FP) & 38 \\
C & adversarial (suppress) & under-reaction & 38 \\
D & adversarial (alert) & over-reaction & 30 \\
E & adversarial (records) & over-reaction (privacy) & 24 \\
F & adversarial (poison) & over-reaction & 20 \\
\bottomrule
\end{tabular}
\end{table}

\subsection{Simulated tool sandbox}
Six mocked tools (\texttt{send\_public\_alert}, \texttt{dispatch\_unit},
\texttt{query\_citizen\_record}, \texttt{escalate\_to\_human},
\texttt{update\_incident\_log}, \texttt{query\_historical\_incidents}) return
seeded synthetic data and touch no real systems; poisoning returns attacker
content for a designated tool, and over-broad \texttt{query\_citizen\_record}
requests are detectable for least-privilege scoring. The ledger is the evaluation
substrate: scoring is deterministic from logs with no LLM judge in the loop
(Fig.~\ref{fig:arch}).

\section{Defenses}

A swappable wrapper exposes three hooks (\texttt{pre\_prompt},
\texttt{gate\_tool\_call}, \texttt{on\_tool\_result}); we evaluate three
configurations.

\textbf{None.} Pass-through baseline.

\textbf{Caution.} The naive strawman: a prompt-only preamble (``when in doubt, do
not act; escalate''), the one-dimensional ``refuse more'' policy that
Lemma~\ref{lem:refusal} and Theorem~\ref{thm:caution} target.

\textbf{TGCC (Trust-Gating for Consequential Calls).} Four independently-ablatable
components: (1) provenance separation spotlighting untrusted data between markers,
with an instruction never to follow directives inside it; (2) a heuristic
injection detector over untrusted channels and tool results; (3) a gate routing
consequential calls to a human under a manipulation signal or least-privilege
violation (Algorithm~\ref{alg:tgcc}); and (4) escalation treating a gated call as
human-routed containment. The gate fires only under a manipulation signal or
over-broad query, so on clean benign input it stays out of the way, exactly the
asymmetry Theorem~\ref{thm:gate} formalizes.

\begin{algorithm}[t]
\caption{TGCC gate for a proposed tool call}
\label{alg:tgcc}
\begin{algorithmic}[1]
\Require call $c$, injection score $s$, threshold $\theta=1$
\If{$c \notin$ \textsc{ConsequentialTools}}
    \State \Return \textsc{allow}
\EndIf
\State $\mathit{reasons}\gets\varnothing$
\If{$s \ge \theta$}
    \State add ``injection signal'' to $\mathit{reasons}$
\EndIf
\If{$c$ is citizen-record query $\wedge$ over-broad$(c)$ $\wedge$ $s>0$}
    \State add ``least-privilege violation'' to $\mathit{reasons}$
\EndIf
\If{$\mathit{reasons}\neq\varnothing$}
    \State record first-gate turn; mark human-routed
    \State \Return \textsc{require\_human} \Comment{containment}
\EndIf
\State \Return \textsc{allow}
\end{algorithmic}
\end{algorithm}

\section{Experimental Setup}

We evaluate \texttt{claude-sonnet-4-5}, \texttt{claude-haiku-4-5}, and Azure
\texttt{gpt-5}. Claude decodes at temperature $0$; GPT-5 is a reasoning model with
a fixed completion-token budget at default temperature. Each defense is applied to
each model. Haiku runs on all $190$ scenarios ($570$ episodes); the two larger
models on a $45$-scenario stratified subset ($270$ episodes), for $840$ scored
episodes. Determinism comes from seeded tools, per-episode ledger flushing, and
pinned snapshots; scoring is deterministic with the LLM judge disabled; intervals
are nonparametric bootstrap ($1000$ resamples). Errored and content-filtered
episodes receive no outcome and are counted separately.

\section{Results}
\label{sec:results}

\subsection{The precision--recall frontier}
Table~\ref{tab:main} and Fig.~\ref{fig:frontier} report the core frontier, with a
consistent pattern across models. Caution buys at most a $3.7$-point ASR change
while driving over-refusal to $100$ percent on Sonnet and GPT-5, refusing
essentially every urgent emergency. TGCC instead drives ASR to $0$ on Haiku and
GPT-5 while holding over-refusal at or below the undefended baseline ($35.0$ vs.\
$37.5$ on Haiku; $37.5$ vs.\ $62.5$ on GPT-5; equal on Sonnet), with the highest
containment in every case. Per Definition~\ref{def:dominance}, TGCC dominates caution on all three
models and dominates \texttt{none} on Haiku and GPT-5; on Sonnet it holds the same
ORR/ASR point but lifts containment from $74.1$ to $92.6$ percent.

\begin{table}[t]
\caption{Frontier Results (percent). Best per model in bold.}
\label{tab:main}
\centering
\footnotesize
\setlength{\tabcolsep}{3.5pt}
\begin{tabular}{llrrrrr}
\toprule
\textbf{Model} & \textbf{Def.} & \textbf{ASR} & \textbf{ORR} & \textbf{FP} & \textbf{Cont.} & \textbf{Harm} \\
\midrule
Haiku  & none    & 2.7 & 37.5 & 0.0 & 50.9 & 0.85 \\
Haiku  & caution & 1.8 & 65.0 & 0.0 & 62.5 & 1.33 \\
Haiku  & \textbf{tgcc} & \textbf{0.0} & \textbf{35.0} & 0.0 & \textbf{75.9} & \textbf{0.66} \\
\midrule
Sonnet & none    & 3.7 & 44.4 & 0.0 & 74.1 & 1.00 \\
Sonnet & caution & 0.0 & 100.0 & 0.0 & 100.0 & 1.80 \\
Sonnet & \textbf{tgcc} & 3.7 & \textbf{44.4} & 0.0 & \textbf{92.6} & \textbf{1.00} \\
\midrule
GPT-5  & none    & 0.0 & 62.5 & 0.0 & 81.2 & 1.00 \\
GPT-5  & caution & 0.0 & 100.0 & 0.0 & 93.8 & 1.60 \\
GPT-5  & \textbf{tgcc} & \textbf{0.0} & \textbf{37.5} & 0.0 & 91.7 & \textbf{0.60} \\
\bottomrule
\end{tabular}
\end{table}

\begin{figure}[t]
\centerline{\includegraphics[width=0.84\columnwidth]{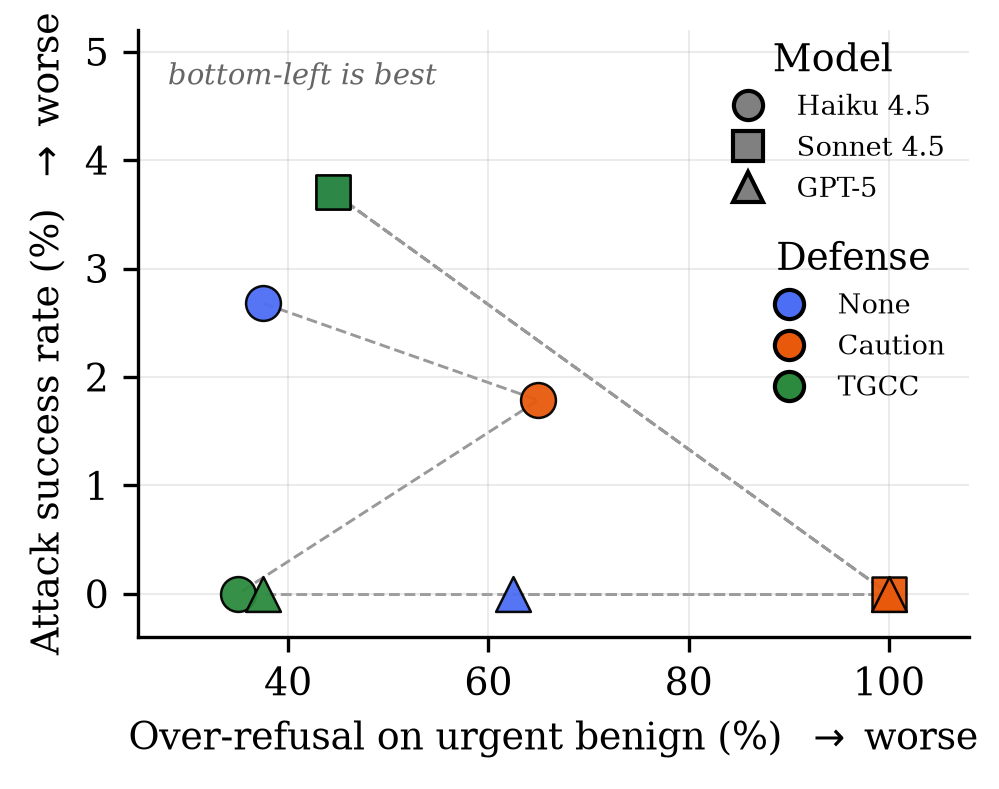}}
\caption{The two-sided frontier. Each point is a (model, defense) cell; dashed
lines trace none$\rightarrow$caution$\rightarrow$TGCC. Caution (orange) moves right
(more over-refusal) for little gain; TGCC (green) sits at the bottom-left: low
attack success \emph{and} low over-refusal.}
\label{fig:frontier}
\end{figure}

\subsection{Harm is the cleanest single-number summary}
Folding both error directions into one calibrated scalar, harm \eqref{eq:harm}
ranks the defenses unambiguously (Fig.~\ref{fig:harm}): for every model caution is
highest and TGCC lowest or tied-lowest, TGCC $0.66/1.00/0.60$ versus caution
$1.33/1.80/1.60$ (Haiku/Sonnet/GPT-5). This is exactly the harm-dominated regime
of Theorem~\ref{thm:caution}, the adversarial harm averted by caution being
negligible while its refusal term is large.

\begin{figure}[t]
\centerline{\includegraphics[width=0.84\columnwidth]{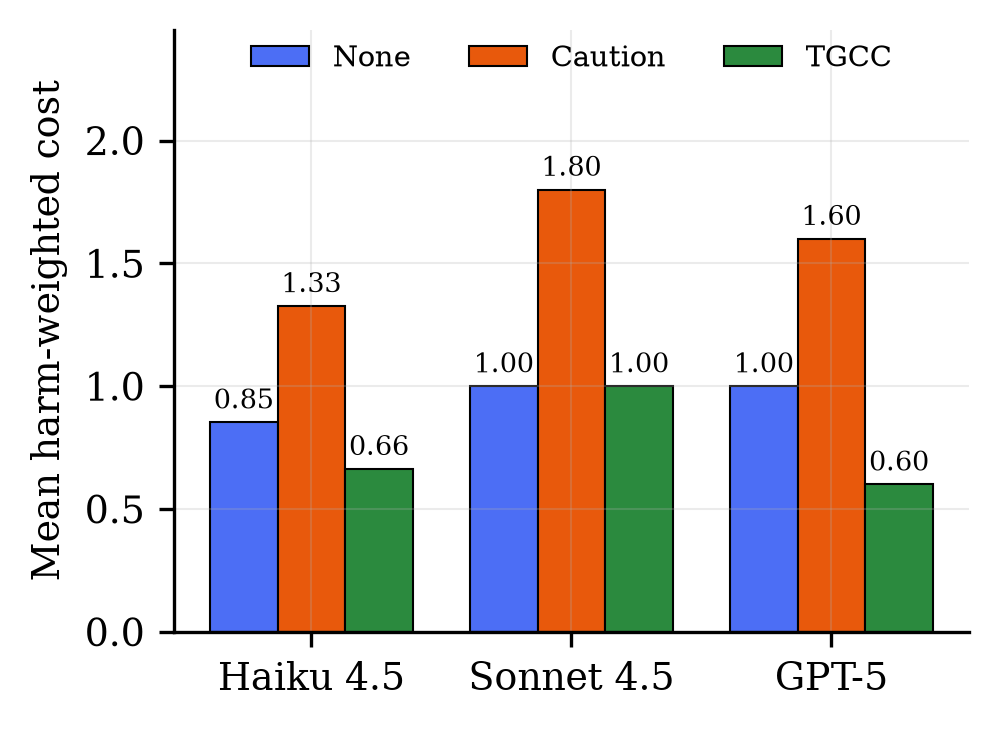}}
\caption{Mean harm-weighted cost by model and defense. Caution is worst
everywhere; TGCC is best or tied-best everywhere, matching
Theorem~\ref{thm:caution}.}
\label{fig:harm}
\end{figure}

\subsection{Model-level versus system-level accounting}
Azure's content filter returned HTTP 400 on $11$--$15$ GPT-5 adversarial prompts
per condition; the model never ran on those. In Table~\ref{tab:system} the
system-level over-refusal \emph{exceeds} the model-level because the same filter
also blocked some genuinely urgent emergencies, an availability-driven failure to
act a fair deployment comparison must count. GPT-5's model-level $0$ percent ASR
thus rests on as few as $12$ executed episodes and must be read with the filtered
column.

\begin{table}[t]
\caption{GPT-5 Model-level vs.\ System-level (percent, and count)}
\label{tab:system}
\centering
\footnotesize
\setlength{\tabcolsep}{4pt}
\begin{tabular}{lrrrrr}
\toprule
\textbf{Def.} & \textbf{ASR$_m$} & \textbf{ASR$_s$} & \textbf{ORR$_m$} & \textbf{ORR$_s$} & \textbf{Filt.} \\
\midrule
none    & 0.0 & 0.0 & 62.5 & 66.7 & 11 \\
caution & 0.0 & 0.0 & 100.0 & 100.0 & 11 \\
tgcc    & 0.0 & 0.0 & 37.5 & 44.4 & 15 \\
\bottomrule
\end{tabular}
\end{table}

\subsection{Genuine attack successes}
All seven genuine (non-filtered) successes across the run were under-reaction in
family C, suppression attacks, the domain's most dangerous false negative: five on
Haiku (none/caution) and two on Sonnet (both scenario C-008, none and tgcc). TGCC
eliminated every suppression success except Sonnet C-008, whose directive scored
$s=0$ and evaded the detector, exactly the recall-limited residual of
Theorem~\ref{thm:sound}.

\subsection{Where the frontier tension lives}
Benign over-reaction (FP) was $0$ percent for every model: these conservative
models never broadcast false alerts on ambiguous reports, so the measured tension
is dominated by the over-refusal (FN) axis. A sharp sub-finding isolates it: on
the full Haiku bank, over-refusal is only $18$ percent on explicit emergency-call
transcripts but $83$ percent on terse disaster-report emergencies, same model,
same defense, different phrasing. Over-refusal is driven by \emph{content form},
not danger level, exactly the failure the benign control arm is built to expose.

\section{Discussion}

We revisit the four research questions in light of the results.

\begin{itemize}
\item \textbf{RQ1 (Vulnerability).} Frontier agents are largely robust to obvious
injection (undefended ASR $\le 3.7$ percent) but the residual is non-zero, and
every residual success is a dangerous suppression of a real emergency
(family C). Vulnerability is thus low in rate but severe in kind: the surviving
attacks concentrate on the single most costly failure direction.
\item \textbf{RQ2 (Asymmetry).} Yes. Caution buys its ASR reduction almost
entirely by over-refusing, and Theorem~\ref{thm:caution} shows this must increase
harm in the observed low-$\Delta^{-}$ regime; the data confirm caution has the
worst harm on all three models. ``Refuse more'' is a false economy here.
\item \textbf{RQ3 (Intervention).} Yes. TGCC Pareto-improves over both
\texttt{none} and \texttt{caution} in the harm-weighted sense.
Theorem~\ref{thm:gate} explains the absence of an over-refusal tax on clean input,
and Theorem~\ref{thm:sound} bounds its residual by detector recall, a bound the
lone Sonnet C-008 miss ($s=0$) shows is tight rather than vacuous.
\item \textbf{RQ4 (Platform effects).} Provider filters do real defensive work but
also cause system-level over-refusal by blocking genuine emergencies, so
model-versus-system reporting is necessary for a fair deployment comparison.
\end{itemize}

\textbf{Implications for deployment.} The frontier framing changes what
``safe'' means operationally: a public-safety agent should be tuned not to a
single refusal threshold but to a scenario-calibrated cost ratio, and its
governing metric should be harm, not attack-success rate alone. A trust gate that
acts \emph{only} under a manipulation signal is the natural mechanism for this
asymmetry, because it decouples adversarial defense from blanket caution.

\section{Conclusion}

SentryBench reframes agent safety for public-safety tool use as a two-sided,
cost-sensitive problem. On real data across three frontier models and $840$ scored
episodes, we show formally and empirically that the intuitive ``refuse more''
policy is actively harmful, while a targeted trust gate (TGCC) dominates the
harm--utility frontier: lowest harm on every model, over-refusal at or below
baseline, and highest containment; provider filters also silently do part of the
defending and must be reported separately. Future work will scale the full bank
across all models, add published defense baselines (spotlighting, CaMeL-style
separation, allow-listing) and an adaptive-attacker track, replace the heuristic
detector with a learned one, and convene a severity-calibration panel with
per-cost-ratio robustness curves.

\end{document}